\documentclass[conference]{IEEEtran}
\IEEEoverridecommandlockouts
\usepackage[utf8]{inputenc}
\usepackage[T1]{fontenc}
\usepackage{cite}
\usepackage{amsmath,amssymb,amsfonts,amsthm,bm,bbm,mathtools}
\usepackage{graphicx}
\usepackage{textcomp}
\usepackage{xcolor}
\usepackage{hyperref}
\usepackage{subfig}
\usepackage{balance}
\usepackage[dvipsnames]{xcolor}

\newtheorem{theorem}{Theorem}
\newtheorem{lemma}{Lemma}

\newtheorem{prop}{Proposition}
\usepackage{algorithm,algorithmic}
\usepackage{color,soul}
\usepackage{graphicx}
\usepackage{subfig}
\usepackage{hyperref}
\newtheorem{Ass}{Assumption}

\newtheorem{remark}{Remark}

\usepackage{color,soul}

\newcommand{\R}{\mathbb{R}}
\newcommand{\1}{\bm{1}}
\newcommand{\wt}{\mathbf{w}}
\newcommand{\wbar}{\overline{w}}
\newcommand{\Fbar}{\overline{F}}
\newcommand{\fbar}{\overline{f}}

\newcommand{\norm}[1]{\left\lVert #1 \right\rVert}

\newcommand{\mat}[1]{\mathbf{#1}}

\title{Decentralized Time-Varying Optimization for Streaming Data via Temporal Weighting}

\author{%
  Muhammad Faraz Ul Abrar$^{*}$, Nicol\`o Michelusi$^{*}$, and Erik G. Larsson$^{\dagger}$\\
  $^{*}$School of Electrical, Computer and Energy Engineering, Arizona State University\\
  $^{\dagger}$Department of Electrical Engineering (ISY), Link\"oping University\\
  Emails: mulabrar@asu.edu, nicolo.michelusi@asu.edu, erik.g.larsson@liu.se

\thanks{This research has been funded in part by NSF under grant CNS-2129015. The work of E. G. Larsson was supported in part by ELLIIT, VR, and the KAW foundation. }
}

\begin{document}
\maketitle

\begin{abstract}
Classical optimization theory largely focuses on fixed objective functions, whereas many modern learning systems operate in dynamic environments where data arrive sequentially and decisions must be updated continuously. In this work, we study optimization with streaming data over a distributed network of agents. We adopt a structured, \emph{weight-based} formulation that explicitly captures the streaming-data origin of the time-varying objective: at each time step, every agent receives a new sample, and the network seeks to track the minimizer of a \emph{temporally weighted} objective formed from all samples observed across the network so far. We focus on decentralized gradient descent (DGD) with a limited communication/computation budget, where at each time step, only a limited number of DGD iterations can be performed before the objective changes again. For strongly convex and smooth losses, we analyze the \emph{tracking error} with respect to the time-varying minimizer through a fixed-point theory lens. Our analysis reveals that the tracking error decomposes into a fixed-point tracking term and a bias term induced by data heterogeneity across agents. We specialize the analysis to two natural weighting strategies: uniform weights, which treat all samples equally, and exponentially discounted weights, which geometrically decay the influence of older data. Under uniform weighting, DGD tracks the fixed-point at a rate $\mathcal{O}(1/t)$, whereas discounted weighting yields a non-vanishing fixed-point tracking floor controlled by the discount factor. 
 In both cases, decentralization induces an additional non-zero bias floor under a constant step size.
We validate our theoretical findings through numerical simulations. 
\end{abstract}

\begin{IEEEkeywords}
Decentralized optimization, streaming data, time-varying optimization, DGD, tracking error, online learning.
\end{IEEEkeywords}
\vspace{-5mm}

\section{Introduction}
\label{sec:Intro}
Modern learning systems increasingly rely on optimization to support decision-making in applications such as robotics, telecommunications, cyber-physical systems, and autonomous control \cite{Survey_dist_optm,Time_Structured}. While conventional formulations typically assume a static objective and a fixed data distribution, most real-world systems operate in dynamic environments: data arrive sequentially, the learning objective evolves over time, and decisions must be continually updated rather than computed once and deployed indefinitely \cite{DallAnese2019_stream,OL_survey,CL_Survey}. At the same time, privacy, communication, or architectural constraints often require data acquisition and processing to be carried out across multiple agents in a decentralized manner \cite{Yuan_DFL_Survey2024,FL_Survey,DL_wireless}. These features give rise to distributed learning systems that must adapt online to streaming information while cooperating over a communication network. 

A standard way to model such problems is through time-varying optimization \cite{Time_Structured}, in which the goal is to track the minimizer of a \emph{time-varying} global objective $F_t$ of the form:
\begin{align}
\overline{\mathbf w}_t^* \in \arg\min_{\mathbf w\in\R^d} F_t(\mathbf w)
\triangleq \frac{1}{N}\sum_{n=1}^N \widetilde f_{n,t}(\mathbf w), \qquad t\ge 1,
\label{eq:intro_generic_tvo}
\end{align}
where $\widetilde f_{n,t}(\cdot)$ denotes the local objective at agent $n$ at time $t$. 
This problem has been studied extensively in both single-agent and multi-agent settings; see, e.g., \cite{Polyak_book,Popkov,Time_Structured,Class_Prediction_Correction,QLing2014ADMM,Usman2016DistributedDirected}.
The typical procedure to solve \eqref{eq:intro_generic_tvo} is to apply an iterative optimization method such as (decentralized) gradient descent \cite{Nedic2009DGD}. However, because the objective changes over time, real-time implementations typically permit only a limited number of iterations before the problem evolves again \cite{DallAnese2019_stream,Time_Structured}. Exact tracking of the time-varying minimizer is therefore generally infeasible, and performance is naturally quantified through the \emph{tracking error} (TE), i.e., the distance between the current iterate and the current optimizer $\overline{\mathbf w}_t^*$.

Early works on time-varying optimization focused on the single-agent setting \cite{Polyak_book,tracking_minimum_1998,Popkov}. For smooth and strongly-convex objectives, it is known that gradient descent with a constant step size tracks the time-varying minimizer $\overline{\mathbf w}_t^*$ up to an $\mathcal{O}(C)$ neighborhood when the minimizer drift is uniformly bounded, i.e., for all $t \geq1$, $\|\overline{\mathbf w}_{t+1}^*-\overline{\mathbf w}_t^*\|\le C$ \cite{Polyak_book,Class_Prediction_Correction,Time_Structured}. Analogous conclusions hold in decentralized settings with an additional non-vanishing error caused by network disagreement; see, e.g., \cite{Usman2016DistributedDirected,QLing2020}. 
Despite the diversity of algorithms considered, all these works adopt the generic model \eqref{eq:intro_generic_tvo}, thereby treating the evolution of $\{F_t\}_t$ as an arbitrary time-varying process. While broadly applicable, this abstraction can be conservative in learning settings where the temporal variation is induced by a specific mechanism, such as the sequential arrival of data. Consequently, in such settings, exploiting these structured objective evolutions can lead to sharper and more interpretable TE guarantees.

Motivated by this observation, and building on our prior single-agent work \cite{TV_asilomar25}, we study decentralized optimization over a multi-agent network from a streaming-data perspective. Specifically, we consider a \emph{structured} time-varying formulation in which the objective at time $t$ is a \emph{temporally weighted} average of the instantaneous network losses observed up to time $t$, with the weights encoding the relative temporal importance of past data. Under this formulation, we analyze multi-iteration DGD under a limited communication/computation budget and derive \emph{weight-specific} tracking guarantees. We then specialize our analysis to two canonical weighting strategies: uniform weights, which assign equal importance to all past samples and model stationary environments, and exponentially discounted weights, which geometrically decay past sample contributions, prioritizing recent observations. Using the fixed-point operator viewpoint of \cite{MHT}, we show that uniform weighting yields a vanishing fixed-point tracking contribution with rate $\mathcal{O}(1/t)$, whereas discounted weighting induces a non-vanishing steady-state tracking floor controlled by the discount factor. In both cases, an additional bias term is present and is controlled by the learning step size.



\paragraph*{Notation} Scalars are denoted by italic letters, vectors by boldface lowercase letters, and matrices by boldface uppercase letters. The all-ones vector is denoted by $\1$, and $\mat{I}$ denotes the identity matrix. For a vector $\mathbf w\in\R^N$, its $n$-th component is denoted by $[\mathbf w]_n$. The Euclidean norm is denoted by $\norm{\cdot}$, transpose is denoted by $(\cdot)^\top$ and for a matrix $\mat{A}$, $\norm{\mat{A}}_2$ denotes its spectral norm.
\vspace{-1.5mm}
\section{System Model}
\label{sec:model}
We consider a network of $N$ agents connected over an undirected, connected, time-invariant graph $\mathcal G=(\mathcal V,\mathcal E)$, where $\mathcal V=\{1,\dots,N\}$ is the set of agents and
$\mathcal{E}\subseteq\mathcal{V}\times\mathcal{V}$ is the edge set. In particular, agents $m$ and $n$ can exchange information if and only if $(m,n)\in\mathcal{E}$. 
The communication across agents is captured by a symmetric (not necessarily entrywise nonnegative) 
mixing matrix $\mat{M}\in\R^{N\times N}$ induced by the network topology, satisfying
\begin{align}
\mat{M}=\mat{M}^\top, \qquad \mat{M}\1=\1.
\label{eq:mixing_matrix_assump}
\end{align}
We further assume that the eigenvalues of $\mat{M}$ lie in $(-1,1]$ and are ordered as  $1=\lambda_1(\mat{M})>\lambda_2(\mat{M})\ge \cdots \ge \lambda_N(\mat{M})>-1$.

We next introduce the streaming-data model that induces the time variation of the objective. While the generic problem in \eqref{eq:intro_generic_tvo} is posed over $\R^d$, in this paper we focus on the scalar case $w\in\R$.\footnote{The extension to vector-valued decision variables is conceptually identical but requires heavier notation.} At each time $t\ge 1$, every agent $n\in\{1,\dots,N\}$ observes a new data sample and incurs an instantaneous loss $\ell_{n,t}(w)$. We define the network-wide instantaneous loss as
\begin{align}
F_t(w) \triangleq \frac{1}{N}\sum_{n=1}^N \ell_{n,t}(w).
\label{eq:instantaneous_network_loss}
\end{align}
The goal of the agents is to collaboratively obtain the minimizer of the global objective $\Fbar(\cdot)$ defined as a temporally weighted average of the past instantaneous losses. This leads to the following optimization problem:
\begin{align}
\wbar_t^* \in \arg\min_{w\in\R} \Fbar_t(w), \quad\Fbar_t(w) \triangleq \sum_{i=1}^t a_i(t) F_i(w), 
\label{eq:main_tv_prob}
\end{align}
where $a_i(t)$ denotes the weight assigned at time $t$ to the network loss associated with the samples observed at time $i$. The temporal weights are assumed to satisfy 
\begin{align}
a_i(t)\in[0,1], \qquad \sum_{i=1}^t a_i(t)=1,
\label{eq:weight_simplex}
\end{align}
for all $t\ge 1$. Note that since the global objective $\Fbar_t(\cdot)$ in \eqref{eq:main_tv_prob} evolves with time, so does its minimizer $\wbar_t^*$. Our objective is to understand how accurately a decentralized optimization method can \emph{track} $\wbar_t^*$ under a limited communication and computation budget per time step.






\section{Decentralized Gradient Descent for Streaming Data}
\label{sec:alg_perf}
We now describe a DGD-type update for the streaming time-varying problem in \eqref{eq:main_tv_prob}, based on the classical DGD method \cite{Nedic2009DGD}.
To this end, each agent maintains a local parameter $w_{n,t}$ at time $t$. For the analysis, define the vector of stacked local parameters $\wt_t \triangleq [w_{1,t},\dots,w_{N,t}]^\top \in \R^N$. We also define the component-wise instantaneous objective
$f_t(\mathbf w) \triangleq \sum_{n=1}^N \ell_{n,t}(w_n)$
and its temporally weighted counterpart
\begin{align}
\fbar_t(\mathbf w)
\triangleq \sum_{i=1}^t a_i(t) f_i(\mathbf w)
=\sum_{i=1}^t a_i(t)\sum_{n=1}^N \ell_{n,i}(w_n).
\label{eq:stacked_weighted_obj}
\end{align}
Note that the objectives $\Fbar_t(\cdot)$ and $\fbar_t(\cdot)$ are built from the same collection of sample losses $\{\ell_{n,i}\}$, but they act on different variables. In particular, $\Fbar_t$ is defined over a common scalar decision variable $w\in\R$, whereas $\fbar_t$ is defined over the stacked vector of local parameters $\mathbf w=(w_1,\dots,w_N)^\top\in\R^N$.
On the consensus subspace
$\{\mathbf w\in\R^N : \mathbf w = w\1\}$, however, the two objectives are related as $\fbar_t(w\1)=\sum_{i=1}^t a_i(t)\sum_{n=1}^N \ell_{n,i}(w)=N\Fbar_t(w)$. Hence, when all agents agree on a common parameter, $\fbar_t$ and $\Fbar_t$ share the same minimizers. Finally, we denote the stacked time-$t$ global minimizer of \eqref{eq:main_tv_prob} by $\overline{\mathbf w}_t^* \triangleq \overline w_t^*\,\bm{1}\in\mathbb{R}^N$. 

After the new samples arrive at time $t+1$, the network performs $E$ DGD iterations on the updated temporally weighted objective. Starting from ${\wt_t^{(0)}{=}\wt_t}$, the inner iterates evolve as
\begin{align}
\wt_t^{(k+1)} = \mat{M}\wt_t^{(k)} - \eta \nabla \fbar_{t+1}(\wt_t^{(k)}),
\label{eq:dgd_inner_updates}
\end{align}
for all $k=0,\dots,E-1$, where $\eta>0$ is a constant step size. In particular, at each inner iteration, the agents combine their current parameters with those of their neighbors through the mixing matrix $\mat{M}$ and then take a gradient step on the updated temporally weighted objective. After $E$ such inner iterations, the local parameters are updated as $\wt_{t+1}=\wt_t^{(E)}$. Notably, the update in \eqref{eq:dgd_inner_updates} requires evaluating the gradient of the temporally weighted objective, which in principle depends on all samples observed up to time $t+1$. Since our focus is on the fundamental tracking limits of decentralized learning from streaming data, we do not impose memory constraints in the present analysis. 
Memory-constrained extensions are outside the scope of this paper and will be addressed in ongoing work 
\cite{Decentralized_TV_journal_prep}.

Owing to the time-varying nature of \eqref{eq:main_tv_prob} and constrained communication/computation budget per time step, we characterize the performance using the tracking error, defined as
\begin{align}
\mathrm{TE}(t) \triangleq \norm{\wt_t - \overline{\mathbf w}_t^*},
\label{eq:TE_def}
\end{align}
and the asymptotic tracking error, defined as $\mathrm{ATE} \triangleq \limsup_{t\to\infty} \norm{\wt_t-\overline{\mathbf w}_t^*}$, quantifying how well the decentralized iterates track the time-varying optimizer induced by the streaming-data process. 

\section{Tracking Error Analysis}
\label{sec:analysis}
In this section, we analyze the DGD updates \eqref{eq:dgd_inner_updates} for solving the streaming time-varying problem \eqref{eq:main_tv_prob}. We begin by deriving generic TE bounds and then specialize them to two natural temporal weighting strategies. For the analysis, we require the following standard assumptions, widely adopted in analyzing gradient-based methods; see, e.g., \cite{Nesterov,boyd2004convex,MHT}.

\begin{Ass}
\label{ass:smooth_sc}
For every agent $n$ and time $t$, the loss $\ell_{n,t}(\cdot)$ is $L$-smooth and $\mu$-strongly convex, i.e., for all $x,y\in\R$, 
\begin{align} \left| \ell_{n,t}'(x) - \ell'_{n,t}(y)\right| \leq L| {x} - {y}|,\end{align}
\begin{align}
\ell_{n,t}({y}) \geq \ell_{n,t}({x}) +  \ell_{n,t}'(x)({y} - {x}) + \frac{\mu}{2}| y- {x} |^2.
\end{align}  
\end{Ass}
Throughout the analysis, we use $\kappa \triangleq L/\mu$ to denote the condition number of the optimization problem.
Under Assumption~\ref{ass:smooth_sc}, both $\Fbar_t(\cdot)$ and $\fbar_t(\cdot)$ are also $L$-smooth and $\mu$-strongly convex for every $t\geq1$. Likewise, it is easy to verify that the component-wise objectives $f_t(\cdot)$ and $\fbar_t(\cdot)$ are also $L$-smooth and $\mu$-strongly convex for all $t\geq 1$. That is, for all $\mathbf u,\mathbf v\in\mathbb{R}^N$ and for either choice
$\tilde f_t \in \{f_t,\overline{f}_t\}$,
\begin{align}
\bigl\|\nabla \tilde f_t(\mathbf{u}) - \nabla \tilde f_t(\mathbf{v})\bigr\|
   \leq L \|\mathbf{u}-\mathbf{v}\|,
\label{eq:tilde_f_t_smooth}
\end{align}
\begin{align}
\tilde f_t(\mathbf{v})
   \ge \tilde f_t(\mathbf{u})
      +  \nabla \tilde f_t(\mathbf{u})^\top (\mathbf{v}-\mathbf{u})
      + \frac{\mu}{2} \|\mathbf{v}-\mathbf{u}\|^2.
\label{eq:tilde_f_t_sc}
\end{align}
To analyze the TE in \eqref{eq:TE_def}, we adopt a fixed-point operator theory viewpoint \cite{MHT} to study the DGD updates \eqref{eq:dgd_inner_updates}. To this end, define the time-varying DGD mapping
\begin{align}
\phi_t(\mathbf w) \triangleq \mat{M}\mathbf w - \eta\nabla\fbar_t(\mathbf w).
\label{eq:dgd_map}
\end{align}
With this, the $E$ DGD updates in \eqref{eq:dgd_inner_updates} at time $t$ can be expressed as applying the operator $\phi_{t+1}$ to $\mathbf{w}_t$ $E$ times, i.e., \begin{align}
    \mathbf{w}_{t+1}
    = (\underbrace{\phi_{t+1}\circ\cdots\circ\phi_{t+1}}_{E\ \text{times}})
      (\mathbf{w}_t)
    \;\triangleq\; \Phi_{t+1}(\mathbf{w}_t), 
    \label{eq:Phi-def}
\end{align}
where $\Phi_{t}$ denotes the corresponding
composition mapping.

Importantly, under a suitable step-size condition, the DGD update operator $\phi_t$ is contractive, as formalized next.
\begin{lemma}
\label{lem:contraction}
Suppose Assumption~\ref{ass:smooth_sc} holds. Fix any $t\ge 1$ and consider $\phi_t$ in \eqref{eq:dgd_map}. If the step size is chosen such that $0<\eta \le \frac{1+\lambda_N(\mat{M})}{L+\mu}$, then $\phi_t$ is a contraction, i.e.,
for any $\mathbf u,\mathbf v\in\mathbb{R}^N$,
\begin{align*}
\|\phi_t(\mathbf u)-\phi_t(\mathbf v)\|\le (1-\eta\mu) \|\mathbf u-\mathbf v\|,
\end{align*}
with $1-\eta\mu \in (0,1)$.
Consequently, $\Phi_t$ defined in \eqref{eq:Phi-def} is also a contraction, with the contraction factor $\alpha \triangleq (1-\eta\mu)^E$. 
\end{lemma}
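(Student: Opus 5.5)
The plan is to reduce the nonlinear operator difference $\phi_t(\mathbf u)-\phi_t(\mathbf v)$ to a symmetric linear map applied to $\mathbf u-\mathbf v$, and then bound that map's spectrum from both sides. The key structural fact is that $\fbar_t$ in \eqref{eq:stacked_weighted_obj} is \emph{separable across agents}: its $n$-th gradient component is $g_n(w_n)\triangleq\sum_{i=1}^t a_i(t)\ell_{n,i}'(w_n)$, which depends only on $w_n$. By Assumption~\ref{ass:smooth_sc} and \eqref{eq:weight_simplex}, each scalar function $g_n$ is $L$-Lipschitz. It is also strongly monotone with modulus $\mu$, i.e., $(g_n(x)-g_n(y))(x-y)\ge\mu|x-y|^2$; this follows by adding the strong convexity inequality with $x$ and $y$ swapped.

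\textbf{Step 1 (linearization).} For each $n$, I define $d_n\triangleq (g_n(u_n)-g_n(v_n))/(u_n-v_n)$ when $u_n\neq v_n$, and $d_n\triangleq\mu$ otherwise. The two scalar properties above give $d_n\in[\mu,L]$. With $\mat D\triangleq\mathrm{diag}(d_1,\dots,d_N)$ we obtain $\nabla\fbar_t(\mathbf u)-\nabla\fbar_t(\mathbf v)=\mat D(\mathbf u-\mathbf v)$, and therefore
\begin{align*}
\phi_t(\mathbf u)-\phi_t(\mathbf v)=(\mat M-\eta\mat D)(\mathbf u-\mathbf v).
\end{align*}
This exploits the scalar, component-wise setting; in general one would use an integrated Hessian instead.

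\textbf{Step 2 (spectral bound).} The matrix $\mat M-\eta\mat D$ is symmetric, so its spectral norm equals its largest absolute eigenvalue. I bound it in the Loewner order from both sides:
\begin{align*}
(\lambda_N(\mat M)-\eta L)\mat I\preceq\mat M-\eta\mat D\preceq(1-\eta\mu)\mat I,
\end{align*}
using $\lambda_N(\mat M)\mat I\preceq\mat M\preceq\mat I$ and $\mu\mat I\preceq\mat D\preceq L\mat I$. The upper end is already the desired factor. For the lower end, I need $\lambda_N(\mat M)-\eta L\ge-(1-\eta\mu)$, which is equivalent to $\eta(L+\mu)\le 1+\lambda_N(\mat M)$. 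This is exactly the step-size condition in the statement. Hence $\norm{\mat M-\eta\mat D}_2\le 1-\eta\mu$.

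This two-sided eigenvalue check is the only delicate point, and it is precisely where the condition $1+\lambda_N(\mat M)$ enters: possibly negative eigenvalues of $\mat M$ combined with large curvature $L$ could otherwise push eigenvalues below $-1$.

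\textbf{Step 3 (range of the factor and composition).} For the range of $1-\eta\mu$: since $\lambda_N(\mat M)<1$ and $\mu\le L$, we have $\eta\mu\le\mu(1+\lambda_N(\mat M))/(L+\mu)<2\mu/(L+\mu)\le1$, so $1-\eta\mu\in(0,1)$. Finally, $\Phi_t$ is the $E$-fold composition of $\phi_t$. Applying the contraction bound $E$ times gives $\|\Phi_t(\mathbf u)-\Phi_t(\mathbf v)\|\le(1-\eta\mu)^E\|\mathbf u-\mathbf v\|=\alpha\|\mathbf u-\mathbf v\|$.
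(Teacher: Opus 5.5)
Your proof is correct and takes essentially the same route as the argument the paper defers to in \cite{MHT}: write $\phi_t(\mathbf u)-\phi_t(\mathbf v)=(\mat M-\eta\mat A)(\mathbf u-\mathbf v)$ for a symmetric $\mat A$ with $\mu\mat I\preceq\mat A\preceq L\mat I$ (the mean-Hessian theorem, which the paper also uses in its proof of Lemma~\ref{lem:generic-drift}), confine the spectrum of $\mat M-\eta\mat A$ to $[\lambda_N(\mat M)-\eta L,\,1-\eta\mu]$, and use $\eta(L+\mu)\le 1+\lambda_N(\mat M)$ to control the lower end. The only difference is that you build $\mat A$ explicitly as a diagonal matrix of difference quotients using the separability of $\fbar_t$; this is more elementary, since it needs no Hessian or integral form, but as you note it relies on the scalar-per-agent setting.
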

\noindent The proof of Lemma~\ref{lem:contraction} can be found in \cite{MHT}. Notably, under the conditions of Lemma~\ref{lem:contraction}, Banach's fixed-point theorem \cite{Rudin1976} guarantees that, for each $t$, there exists a unique fixed point
\begin{align}
    \widetilde{\mathbf{w}}_t
    \in \mathbb{R}^N
    \quad\text{s.t.}\quad
    \widetilde{\mathbf{w}}_t
    = \phi_t(\widetilde{\mathbf{w}}_t).
    \label{eq:fixed_point_def}
\end{align}
Moreover, since $\widetilde{\mathbf{w}}_t=\phi_t(\widetilde{\mathbf{w}}_t)$, repeated application implies $\widetilde{\mathbf{w}}_t=\Phi_t(\widetilde{\mathbf{w}}_t)$; hence, the same point is also the fixed point of the $E$-step operator $\Phi_t$. Thus, the point $\widetilde{\mathbf{w}}_t$ is the network iterate that DGD updates would converge to if the objective $\overline{F}_t$ were frozen at time $t$.

Next, building upon this fixed-point perspective, adding and subtracting the moving fixed point in \eqref{eq:TE_def}, along with applying the triangle inequality, yields the following TE decomposition: 
\begin{align}
    \mathrm{TE}(t) 
    &\leq
    \bigl\|\mathbf{w}_t - \widetilde{\mathbf{w}}_t \bigr\|
    +
    \bigl\|\widetilde{\mathbf{w}}_t - \overline{\mathbf{w}}_t^*\bigr\|.
    \label{eq:TE-decomposition}
\end{align}
The first term in \eqref{eq:TE-decomposition} measures how well the iterates track the moving fixed point when only $E$ inner DGD steps are performed per time-index, which we refer to as the \emph{fixed-point tracking error} $\mathrm{FPTE}(t) {\triangleq} \|\mathbf{w}_t - \widetilde{\mathbf{w}}_t\|$. The second term captures the mismatch between the fixed point and the stacked optimizer at time $t$, and thus represents the bias induced by decentralization of data across agents. To bound these two terms, we impose the following additional assumption.
\begin{Ass}
\label{ass:bounded_minimizers}
Let $
w_{n,t}^* \in \arg\min_{w\in\R}\ell_{n,t}(w)$ be the minimizer for the loss of agent $n$ at time $t$.
There exists a constant $C>0$ such that 
$ |w_{n,t}^*|\le C,\forall n,t$.
\end{Ass}
Assumption~\ref{ass:bounded_minimizers} is a structural condition on the sample losses, rather than a direct assumption on the drift of the minimizer sequence, which is more commonly adopted in the time-varying optimization literature; see, e.g., \cite{Time_Structured}. Similar bounded-minimizer assumptions have also been used in related analyses of streaming and federated learning \cite{TV_asilomar25,ChungHu25StreamFL}. In our setting, this condition ensures that both the time-varying optimizer and the fixed-point sequence remain uniformly bounded, which in turn yields the uniform gradient bounds stated next.
\begin{lemma}
\label{lem:boundedness_compact}
Under Assumptions~\ref{ass:smooth_sc} and~\ref{ass:bounded_minimizers}, for all $t\ge 1$,
\begin{align}
|\overline w_t^*|
&\le C\sqrt{\kappa},
\ \ \ 
\|\widetilde{\mathbf w}_t\|
\le C\sqrt{N\kappa}.
\label{eq:minimizer_fixedpoint-bnd}
\end{align}
Moreover, for all $i,t\geq 1$,
\begin{align}
\|\nabla\overline f_t(\widetilde{\mathbf w}_i)\|\le G,
\|\nabla f_t(\widetilde{\mathbf w}_i)\|\le G,\|\nabla \overline f_t(\overline{\mathbf w}_t^*)\|\le G,
\label{eq:grad-bnd-G}
\end{align}
where $G \triangleq2L\,C\sqrt{N\kappa}$.
\end{lemma}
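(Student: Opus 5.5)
The plan is to get all the bounds from one fact. If $g$ is $\mu$-strongly convex and $L$-smooth with minimizer $x^\star$, then for every $x$
\[
\tfrac{\mu}{2}|x-x^\star|^2\le g(x)-g(x^\star)\le \tfrac{L}{2}|x-x^\star|^2 .
\]
I will compare each minimizer against the reference point $0$, which lies within distance $C$ of every local minimizer $w_{n,i}^*$ by Assumption~\ref{ass:bounded_minimizers}.

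\emph{Bound on $\overline w_t^*$.} By strong convexity of $\Fbar_t$ around its minimizer,
\[
\tfrac{\mu}{2}|\overline w_t^*|^2\le \Fbar_t(0)-\Fbar_t(\overline w_t^*).
\]
Each $\ell_{n,i}$ is minimized at $w_{n,i}^*$, so $\Fbar_t(\overline w_t^*)\ge \sum_i a_i(t)\frac1N\sum_n \ell_{n,i}(w_{n,i}^*)$. Therefore
\[
\Fbar_t(0)-\Fbar_t(\overline w_t^*)\le \sum_i a_i(t)\tfrac1N\sum_n\bigl[\ell_{n,i}(0)-\ell_{n,i}(w_{n,i}^*)\bigr]\le \tfrac{L}{2}C^2,
\]
where the last step uses the upper smoothness bound and \eqref{eq:weight_simplex}. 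This gives $|\overline w_t^*|\le C\sqrt{\kappa}$.

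\emph{Bound on $\widetilde{\mathbf w}_t$.} The key step is to recognize $\widetilde{\mathbf w}_t$ as the minimizer of a strongly convex surrogate. Define
\[
J_t(\mathbf w)\triangleq \fbar_t(\mathbf w)+\tfrac{1}{2\eta}\mathbf w^\top(\mat I-\mat M)\mathbf w .
\]
Its stationarity condition $\nabla\fbar_t(\mathbf w)+\frac1\eta(\mat I-\mat M)\mathbf w=0$ is exactly the fixed-point equation \eqref{eq:fixed_point_def}. Since the eigenvalues of $\mat M$ are at most $1$, the matrix $\mat I-\mat M$ is positive semidefinite. Hence $J_t$ is $\mu$-strongly convex, and its unique minimizer is $\widetilde{\mathbf w}_t$.

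The same argument as before then applies. The quadratic term vanishes at $0$ and is nonnegative at $\widetilde{\mathbf w}_t$, and $\fbar_t\ge\sum_i a_i(t)\sum_n\ell_{n,i}(w_{n,i}^*)$. Together these give
\[
\tfrac{\mu}{2}\|\widetilde{\mathbf w}_t\|^2\le J_t(0)-J_t(\widetilde{\mathbf w}_t)\le \fbar_t(0)-\sum_i a_i(t)\sum_n \ell_{n,i}(w_{n,i}^*)\le \tfrac{NLC^2}{2},
\]
so $\|\widetilde{\mathbf w}_t\|\le C\sqrt{N\kappa}$. I expect this identification of the fixed point as a minimizer of $J_t$ to be the only nonroutine step. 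If it were unavailable, one would instead have to bound $\widetilde{\mathbf w}_t$ through the contraction of Lemma~\ref{lem:contraction}, which yields an $\eta$-dependent bound.

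\emph{Gradient bounds.} Since $\ell_{n,t}'(w_{n,t}^*)=0$, smoothness gives $|\ell_{n,t}'(x)|\le L|x-w_{n,t}^*|$. Stacking over $n$ and applying the triangle inequality,
\[
\|\nabla f_t(\widetilde{\mathbf w}_i)\|\le L\bigl(\|\widetilde{\mathbf w}_i\|+\|[w_{1,t}^*,\dots,w_{N,t}^*]^\top\|\bigr)\le L\bigl(C\sqrt{N\kappa}+C\sqrt N\bigr)\le 2LC\sqrt{N\kappa},
\]
using $\kappa\ge1$. For $\nabla\fbar_t=\sum_i a_i(t)\nabla f_i$, the convex combination of these bounds gives the same $G$.

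Finally, $\|\overline{\mathbf w}_t^*\|=\sqrt N|\overline w_t^*|\le C\sqrt{N\kappa}$, and the identical componentwise argument yields $\|\nabla\fbar_t(\overline{\mathbf w}_t^*)\|\le G$.
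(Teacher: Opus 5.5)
Your proof is correct and follows the paper's argument. The minimizer bound is identical, and your surrogate $J_t$ repackages the paper's step of applying strong convexity of $\fbar_t$ at $\widetilde{\mathbf w}_t$ and dropping the nonnegative term $\frac{1}{\eta}\widetilde{\mathbf w}_t^\top(\mat{I}-\mat{M})\widetilde{\mathbf w}_t$. The gradient bounds differ only slightly: you anchor at the local minimizers $w_{n,t}^*$ and average over $i$, using $\kappa\ge 1$, whereas the paper anchors $\nabla\fbar_t$ at the minimizer $\overline{\mathbf m}_t^*$ of $\fbar_t$, bounded by $C\sqrt{N\kappa}$; both routes give $G$.
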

\noindent
The proof is given in Appendix~\ref{app:A}. 
The bounds on the two terms in \eqref{eq:TE-decomposition} are developed next.
\subsection{Bias between the fixed point and the time-varying optimizer}
\label{subsec:bias_term}

We first provide a bound on the second term in \eqref{eq:TE-decomposition}, namely
$\|\widetilde{\mathbf w}_t-\overline{\mathbf w}_t^*\|$.
To this end, we invoke the fixed-point-to-optimizer bound developed in \cite{MHT}, specialized to the present setting.
\begin{lemma}
\label{lem:generic_bias_bound}
Fix any $t\ge 1$ and consider the objective $\fbar_t(\cdot)$ together with the fixed point $\widetilde{\mathbf w}_t$ defined in \eqref{eq:fixed_point_def}. Under the conditions of Lemma~\ref{lem:contraction},
\begin{align}
\|\widetilde{\mathbf w}_t-\overline{\mathbf w}_t^*\|
\le \eta \kappa\Lambda
\|\nabla \fbar_t(\overline{\mathbf w}_t^*)\|\le\eta \kappa\Lambda G ,
\label{eq:bias_bound}
\end{align}
where $\Lambda$ represents the network topology factor and  is given by 
$\Lambda=\frac{1}{1-\lambda_2(\mat{M})}$.
\end{lemma}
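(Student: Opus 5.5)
The idea is to compare the fixed-point equation with the optimality condition of $\overline{\mathbf w}_t^*$, and to split the error into a component along $\1$ and a component in its orthogonal complement. Write $\mathbf e \triangleq \widetilde{\mathbf w}_t-\overline{\mathbf w}_t^*$ and $\mathbf g^*\triangleq\nabla\fbar_t(\overline{\mathbf w}_t^*)$. Two facts about the optimizer are immediate:
\begin{itemize}
\item $\mat{M}\overline{\mathbf w}_t^*=\overline{\mathbf w}_t^*$, since $\mat{M}\1=\1$;
\item $\1^\top\mathbf g^*=N\Fbar_t'(\overline w_t^*)=0$, since $\overline w_t^*$ minimizes $\Fbar_t$.
\end{itemize}
The fixed-point relation $\widetilde{\mathbf w}_t=\mat{M}\widetilde{\mathbf w}_t-\eta\nabla\fbar_t(\widetilde{\mathbf w}_t)$ therefore gives
\begin{align*}
(\mat{I}-\mat{M})\mathbf e=-\eta\nabla\fbar_t(\widetilde{\mathbf w}_t).
\end{align*}
Because $\fbar_t$ is separable with scalar $\mu$-strongly convex, $L$-smooth components, the mean value theorem gives $\nabla\fbar_t(\widetilde{\mathbf w}_t)=\mathbf g^*+\mat{H}\mathbf e$ with $\mat{H}$ diagonal and $\mu\mat{I}\preceq\mat{H}\preceq L\mat{I}$. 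This yields the linear system
\begin{align*}
(\mat{I}-\mat{M}+\eta\mat{H})\mathbf e=-\eta\mathbf g^*.
\end{align*}

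Next decompose $\mathbf e=\bar e\1+\mathbf e_\perp$ with $\1^\top\mathbf e_\perp=0$, and let $\mat{P}$ be the orthogonal projector onto $\1^\perp$. Multiplying the system by $\1^\top$ kills the $\mat{I}-\mat{M}$ term (since $\1^\top(\mat{I}-\mat{M})=0$) and kills the right-hand side (since $\1^\top\mathbf g^*=0$). This leaves $\1^\top\mat{H}\mathbf e=0$, so
\begin{align*}
\bar e\,\mathrm{tr}(\mat{H})=-\1^\top\mat{H}\mathbf e_\perp,\qquad |\bar e|\sqrt N\le \kappa\|\mathbf e_\perp\|
\end{align*}
by Cauchy--Schwarz and $\mathrm{tr}(\mat{H})\ge N\mu$. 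Hence $\|\mathbf e\|\le\sqrt{1+\kappa^2}\,\|\mathbf e_\perp\|$. I would tighten this constant to about $\kappa$, or absorb it using $\eta\le 1/L$.

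For the disagreement component, project with $\mat{P}$:
\begin{align*}
(\mat{I}-\mat{M})\mathbf e_\perp=-\eta\mat{P}\mathbf g^*-\eta\mat{P}\mat{H}\mathbf e .
\end{align*}
On $\1^\perp$, $\mat{I}-\mat{M}\succeq(1-\lambda_2)\mat{I}$, so this gives the bound
\begin{align*}
\|\mathbf e_\perp\|\le\eta\Lambda\bigl(\|\mathbf g^*\|+L\|\mathbf e\|\bigr).
\end{align*}
The $\eta\Lambda L\|\mathbf e\|$ term is the delicate part. Substituting the bound on $\|\mathbf e\|$ gives a self-referential inequality that closes only if $\eta\Lambda L\kappa<1$, which the step-size condition of Lemma~\ref{lem:contraction} does not guarantee.

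This is the main obstacle. To avoid it, rather than using that crude bound I would work with the quadratic form. Take the inner product of the linear system with $\mathbf e$:
\begin{align*}
\mathbf e^\top(\mat{I}-\mat{M})\mathbf e+\eta\,\mathbf e^\top\mat{H}\mathbf e=-\eta\,\mathbf e^\top\mathbf g^*=-\eta\,\mathbf e_\perp^\top\mathbf g^*,
\end{align*}
where the last step uses $\1^\top\mathbf g^*=0$. The left side is at least $\|\mathbf e_\perp\|^2/\Lambda+\eta\mu\|\mathbf e\|^2$, and the right side is at most $\eta\|\mathbf e_\perp\|\|\mathbf g^*\|$. Dropping the $\eta\mu$ term gives $\|\mathbf e_\perp\|\le\eta\Lambda\|\mathbf g^*\|$. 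Combining this with the consensus bound on $\bar e$ above then gives $\|\mathbf e\|\lesssim\eta\kappa\Lambda\|\mathbf g^*\|$, possibly up to a constant such as $\sqrt2$ that I would sharpen by a careful estimate of $\bar e$, following \cite{MHT}. The final inequality $\le\eta\kappa\Lambda G$ follows directly from \eqref{eq:grad-bnd-G} in Lemma~\ref{lem:boundedness_compact}.
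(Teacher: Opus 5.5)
Your setup and your energy argument for the disagreement component are correct. The system $(\mat{I}-\mat{M}+\eta\mat{H})\mathbf e=-\eta\mathbf g^*$ holds with diagonal $\mu\mat{I}\preceq\mat{H}\preceq L\mat{I}$: take $h_n$ to be the difference quotient of the $n$-th scalar derivative, which lies in $[\mu,L]$ by strong convexity and smoothness, so no second derivative is needed. Pairing this system with $\mathbf e$ does give $\|\mathbf e_\perp\|\le\eta\Lambda\|\mathbf g^*\|$, which avoids the circularity you spotted.

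The gap is in the consensus component. As written, you only obtain $\|\mathbf e\|\le\sqrt{1+\kappa^2}\,\eta\Lambda\|\mathbf g^*\|$. This is weaker than the claimed $\eta\kappa\Lambda\|\mathbf g^*\|$ by the factor $\sqrt{1+\kappa^{-2}}$, which reaches $\sqrt2$ at $\kappa=1$, and you leave the sharpening undone. Your fallback of absorbing the constant through $\eta\le 1/L$ cannot remove it. Both your bound and the target are linear in $\eta$, and the excess factor depends on $\kappa$ alone.

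The loss comes from bounding $|\1^\top\mat{H}\mathbf e_\perp|\le\|\mat{H}\1\|\,\|\mathbf e_\perp\|$, which throws away $\1^\top\mathbf e_\perp=0$. At $\kappa=1$ one has $\mat{H}=\mu\mat{I}$ and hence $\bar e=0$ exactly, yet your estimate still permits $\|\mathbf e\|=\sqrt2\|\mathbf e_\perp\|$. Use the orthogonality through the $\mat{H}$-weighted norm instead. Since $\1^\top\mat{H}\mathbf e=0$ and $\mathbf e_\perp=\mathbf e-\bar e\1$,
\begin{align*}
\mathbf e_\perp^\top\mat{H}\mathbf e_\perp=\mathbf e^\top\mat{H}\mathbf e+\bar e^{2}\,\mathrm{tr}(\mat{H})\ge\mathbf e^\top\mat{H}\mathbf e,
\end{align*}
so $\mu\|\mathbf e\|^2\le L\|\mathbf e_\perp\|^2$, i.e., $\|\mathbf e\|\le\sqrt{\kappa}\,\|\mathbf e_\perp\|\le\kappa\|\mathbf e_\perp\|$. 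Combined with $\|\mathbf e_\perp\|\le\eta\Lambda\|\mathbf g^*\|$, this gives the first inequality in \eqref{eq:bias_bound}, even with $\sqrt\kappa$ in place of $\kappa$. The second inequality follows from Lemma~\ref{lem:boundedness_compact}, as you note.

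With this repair, your argument is a self-contained alternative to the paper, which gives no argument of its own here and instead imports the fixed-point bias bound from \cite{MHT}. It also shows that the step-size restriction of Lemma~\ref{lem:contraction} plays no role in the bias estimate beyond ensuring that $\widetilde{\mathbf w}_t$ is well defined.
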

\noindent
The first inequality in \eqref{eq:bias_bound} follows directly from \cite{MHT} by applying the fixed-point bias bound to $\fbar_t(\cdot)$ at each fixed time $t$, whereas the second inequality uses \eqref{eq:grad-bnd-G}.

\begin{remark}
\label{rem:bias_interpretation}
The bound in \eqref{eq:bias_bound} shows that the bias scales linearly with the step size $\eta$, and depends on the condition number $\kappa$, the network factor $\Lambda$, and the gradient magnitude $\|\nabla \fbar_t(\overline{\mathbf w}_t^*)\|$, which reflects data heterogeneity across agents. In particular, if all agents observe identical losses, then $\nabla \fbar_t(\overline{\mathbf w}_t^*)=\mathbf 0$, and the bias vanishes.
\end{remark}

\subsection{Fixed-point tracking error}
Having bounded the bias term, we now derive a bound on the first term in \eqref{eq:TE-decomposition}.
 From \eqref{eq:Phi-def}, we have $\mathbf w_{t+1}=\Phi_{t+1}(\mathbf w_t)$. Next, since $\widetilde{\mathbf w}_{t+1}$ is a fixed point of $\phi_{t+1}$, it is also a fixed point of $\Phi_{t+1}$, i.e., $
\widetilde{\mathbf w}_{t+1}=\Phi_{t+1}(\widetilde{\mathbf w}_{t+1})$. Using these facts, and recalling from Lemma~\ref{lem:contraction} that $\Phi_{t+1}$ is a contraction with factor $\alpha=(1-\eta\mu)^E$, the FPTE at time $t+1$ satisfies
\begin{align}
\mathrm{FPTE}(t+1)
&= \|\mathbf w_{t+1}-\widetilde{\mathbf w}_{t+1}\| \nonumber\\
&= \|\Phi_{t+1}(\mathbf w_t)-\Phi_{t+1}(\widetilde{\mathbf w}_{t+1})\| \nonumber\\
&\overset{(a)}{\leq} \alpha \|\mathbf w_t-\widetilde{\mathbf w}_{t+1}\| \nonumber\\
&\overset{(b)}{\leq} \alpha \|\mathbf w_t-\widetilde{\mathbf w}_t\|
   + \alpha \|\widetilde{\mathbf w}_{t+1}-\widetilde{\mathbf w}_t\|,
\label{eq:fpte_recursion}
\end{align}
where $(a)$ invokes the contractive property of $\Phi_{t+1}$, and $(b)$ follows from the triangle inequality. Therefore,
\begin{align}
\mathrm{FPTE}(t+1)
\le \alpha\,\mathrm{FPTE}(t)
+\alpha\|\widetilde{\mathbf w}_{t+1}-\widetilde{\mathbf w}_t\|.
\label{eq:fpte_recursion_compact}
\end{align}
Using recursion, the FPTE in \eqref{eq:fpte_recursion_compact} can also be expressed as 
\begin{align}
\mathrm{FPTE}(t)
\le
\alpha^{t}\|\mathbf w_0-\widetilde{\mathbf w}_1\|
+\sum_{i=1}^{t-1}\alpha^{t-i}
\|\widetilde{\mathbf w}_{i+1}-\widetilde{\mathbf w}_i\|,
\label{eq:fpte_unrolled}
\end{align}
for all $t\ge 1$. We utilized that $\alpha^{t-1}\mathrm{FPTE}(1)\leq \alpha^{t}\|\mathbf w_0-\widetilde{\mathbf w}_1\|$ (cf. \eqref{eq:fpte_recursion}--(a)). Equation \eqref{eq:fpte_unrolled} shows that the bound on FPTE consists of two components: a geometrically decaying initialization term and an accumulated drift term governed by the moving fixed-point sequence $\{\widetilde{\mathbf w}_t\}_t$. We next derive a bound on the fixed-point drift.
\begin{lemma}
\label{lem:generic-drift}
Suppose Assumption~\ref{ass:smooth_sc} holds. Then, for all $t\ge 1$, the fixed points $\{\widetilde{\mathbf w}_t\}$ of the mappings $\{\phi_t\}$ satisfy
\begin{align}
\|\widetilde{\mathbf w}_{t+1}-\widetilde{\mathbf w}_t\|
\le \frac{1}{\mu}\,
\bigl\|\nabla \fbar_{t+1}(\widetilde{\mathbf w}_{t+1})
-\nabla \fbar_t(\widetilde{\mathbf w}_{t+1})\bigr\|.
\label{eq:generic_fp_drift}
\end{align}
\end{lemma}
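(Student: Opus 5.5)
The plan is to work from the fixed-point equations and use the contraction of $\phi_t$ from Lemma~\ref{lem:contraction}. We have $\widetilde{\mathbf w}_{t+1}=\phi_{t+1}(\widetilde{\mathbf w}_{t+1})$ and $\widetilde{\mathbf w}_t=\phi_t(\widetilde{\mathbf w}_t)$. Write
\begin{align*}
\widetilde{\mathbf w}_{t+1}-\widetilde{\mathbf w}_t
= \bigl[\phi_t(\widetilde{\mathbf w}_{t+1})-\phi_t(\widetilde{\mathbf w}_t)\bigr]
+ \bigl[\phi_{t+1}(\widetilde{\mathbf w}_{t+1})-\phi_t(\widetilde{\mathbf w}_{t+1})\bigr].
\end{align*}
This splits the drift into two parts. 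The first bracket is a difference of the same operator evaluated at two points. The second is a difference of two operators evaluated at the same point.

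For the second bracket, the mixing terms $\mat{M}\widetilde{\mathbf w}_{t+1}$ cancel by the definition \eqref{eq:dgd_map}. What remains is
\begin{align*}
-\eta\bigl(\nabla\fbar_{t+1}(\widetilde{\mathbf w}_{t+1})-\nabla\fbar_t(\widetilde{\mathbf w}_{t+1})\bigr).
\end{align*}
For the first bracket, Lemma~\ref{lem:contraction} gives the bound $(1-\eta\mu)\|\widetilde{\mathbf w}_{t+1}-\widetilde{\mathbf w}_t\|$, using the step-size condition assumed there.

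Applying the triangle inequality to the decomposition then yields
\begin{align*}
\|\widetilde{\mathbf w}_{t+1}-\widetilde{\mathbf w}_t\|\le(1-\eta\mu)\|\widetilde{\mathbf w}_{t+1}-\widetilde{\mathbf w}_t\|+\eta\bigl\|\nabla\fbar_{t+1}(\widetilde{\mathbf w}_{t+1})-\nabla\fbar_t(\widetilde{\mathbf w}_{t+1})\bigr\|.
\end{align*}
Move the first term on the right to the left-hand side and divide by $\eta\mu>0$. This gives \eqref{eq:generic_fp_drift} directly. The factor $\eta$ cancels, which explains why the stated bound is independent of $\eta$.

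The only subtle point is bookkeeping. The statement lists only Assumption~\ref{ass:smooth_sc}, but the argument also needs the step-size condition of Lemma~\ref{lem:contraction}. That condition is required anyway for the fixed points $\widetilde{\mathbf w}_t$ to be well defined and unique, so invoking it costs nothing.

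A second check concerns which operator should be the contraction. The gradient difference must be evaluated at $\widetilde{\mathbf w}_{t+1}$, so the contraction has to be $\phi_t$, not $\phi_{t+1}$. The decomposition above is arranged that way, so no further obstacle is expected.
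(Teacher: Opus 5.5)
Your argument is correct, but it takes a different route from the paper. You use the generic perturbation bound for fixed points of a contraction. With $\rho=1-\eta\mu$ the contraction factor of $\phi_t$, you get $\|\widetilde{\mathbf w}_{t+1}-\widetilde{\mathbf w}_t\|\le(1-\rho)^{-1}\|\phi_{t+1}(\widetilde{\mathbf w}_{t+1})-\phi_t(\widetilde{\mathbf w}_{t+1})\|$, and the mixing terms cancel.

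The paper instead works directly with the fixed-point equations:
\begin{itemize}
\item it subtracts $(\mat{I}-\mat{M})\widetilde{\mathbf w}_s+\eta\nabla\fbar_s(\widetilde{\mathbf w}_s)=\mathbf 0$ at $s=t$ and $s=t+1$;
\item it writes $\nabla\fbar_t(\widetilde{\mathbf w}_t)-\nabla\fbar_t(\widetilde{\mathbf w}_{t+1})=\mat{A}(\widetilde{\mathbf w}_t-\widetilde{\mathbf w}_{t+1})$ with $\mu\mat{I}\preceq\mat{A}\preceq L\mat{I}$, via the mean-Hessian theorem;
\item it inverts the symmetric matrix $(\mat{I}-\mat{M})+\eta\mat{A}$, whose smallest eigenvalue is at least $\eta\mu$ because $\mat{I}-\mat{M}\succeq 0$.
\end{itemize}

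Your version is shorter, avoids the mean-Hessian theorem, and applies verbatim to any family of contractive operators. The price is that it relies on Lemma~\ref{lem:contraction}, and hence on the step-size restriction $\eta\le(1+\lambda_N(\mat{M}))/(L+\mu)$. The paper's argument uses only $\mat{I}-\mat{M}\succeq 0$ and strong convexity, so it proves the bound for every $\eta>0$. That is consistent with the lemma listing only Assumption~\ref{ass:smooth_sc}.

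Relatedly, your remark that the step-size condition is needed for $\widetilde{\mathbf w}_t$ to be well defined is not accurate. For any $\eta>0$, the fixed-point equation is the stationarity condition of the $\eta\mu$-strongly convex function $\frac{1}{2}\mathbf w^\top(\mat{I}-\mat{M})\mathbf w+\eta\fbar_t(\mathbf w)$. So a unique fixed point always exists; the paper merely introduces it through Banach's theorem. Within the paper this is harmless, since the conditions of Lemma~\ref{lem:contraction} are in force wherever the drift bound is used. Still, your proof establishes a slightly weaker statement than the paper's argument does.
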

\noindent
The proof is given in Appendix~\ref{app:A}.

\begin{remark}
Lemma~\ref{lem:generic-drift} shows that the drift of the fixed-point sequence is controlled by the temporal variation of the gradient of the time-weighted objective. In particular, if $\nabla \fbar_t(\cdot)$ changes slowly with $t$, then the corresponding fixed points $\{\widetilde{\mathbf w}_t\}$ cannot move too fast.
\end{remark}





\subsection{Specialization to canonical temporal weighting strategies}
We now specialize the generic fixed-point drift bound in \eqref{eq:generic_fp_drift} to two specific temporal weighting strategies. Combined with the recursion in \eqref{eq:fpte_unrolled}, this yields explicit \emph{weight-specific} TE guarantees.

\subsubsection{Uniform weights}
A natural choice in stationary environments is to assign equal importance to all samples observed so far, leading to
\begin{align}
a_i(t)=\frac{1}{t}, \qquad i=1,\dots,t.
\label{eq:uniform_weights}
\end{align}
Under \eqref{eq:uniform_weights}, the temporally weighted objective satisfies
\begin{align}
\fbar_{t+1}(\mathbf w)
&= \sum_{i=1}^{t+1}\frac{f_i(\mathbf w) }{t+1}
= \frac{1}{t+1}\sum_{i=1}^{t}f_i(\mathbf w)+\frac{1}{t+1}f_{t+1}(\mathbf w)\nonumber\\
&= \frac{t}{t+1}\fbar_t(\mathbf w)+\frac{1}{t+1}f_{t+1}(\mathbf w).
\label{eq:fbar_uniform_recursion}
\end{align}
Continuing from \eqref{eq:generic_fp_drift} and using \eqref{eq:fbar_uniform_recursion}, we can bound the fixed-point drift as
\begin{align}
    \|\widetilde{\mathbf w}_{t+1}-\widetilde{\mathbf w}_t\|  &\leq \frac{1}{\mu} \Big\Vert \frac{1}{t+1}
\nabla f_{t+1}(\widetilde{\mathbf w}_{t+1})
-\frac{1}{t+1}\nabla \fbar_t(\widetilde{\mathbf w}_{t+1})
\Big\Vert \nonumber \\
&\leq\frac{2G}{\mu(t+1)},\label{eq:uniform_fp_drift}
\end{align}
where the second inequality follows from the triangle inequality and the gradient bounds in Lemma~\ref{lem:boundedness_compact}. Substituting \eqref{eq:uniform_fp_drift} into \eqref{eq:fpte_unrolled}, the FPTE under uniform weights satisfies
\begin{align}
\mathrm{FPTE}(t)
\le \alpha^t\|\mathbf w_0-\widetilde{\mathbf w}_1\|
+\frac{2G}{\mu}\sum_{i=1}^{t-1}\frac{\alpha^{t-i}}{i+1}.
\label{eq:FPTE_uniform_pre}
\end{align}

\begin{prop}
\label{prop:S_uniform_bound}
For $\alpha\in(0,1)$, define $S(t)\;\triangleq\;\sum_{i=1}^{t-1}\frac{\alpha^{t-i}}{\,i+1\,}$. Then, for all $t\ge t_0\triangleq \bigl\lceil \frac{2\alpha}{1-\alpha}\bigr\rceil,$ it holds that 
\begin{align}
S(t)\le \frac{A}{t},
\label{eq:S_uniform_bound}
\end{align}
where $A\triangleq \max\left\{t_0S(t_0),\frac{2\alpha}{1-\alpha}\right\}$.
\end{prop}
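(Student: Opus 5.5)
Let $A \triangleq \max\{t_0 S(t_0), \tfrac{2\alpha}{1-\alpha}\}$ as in the statement. The plan is induction on $t$ using a one-step recursion for $S(t)$. Shifting the summation index gives
\begin{align*}
S(t+1)=\sum_{i=1}^{t}\frac{\alpha^{t+1-i}}{i+1}=\alpha\sum_{i=1}^{t-1}\frac{\alpha^{t-i}}{i+1}+\frac{\alpha}{t+1}=\alpha S(t)+\frac{\alpha}{t+1},
\end{align*}
valid for all $t\ge 1$ with $S(1)=0$. This is a linear recursion of the same form as \eqref{eq:fpte_recursion_compact}, so an inductive bound of the form $A/t$ is natural.

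\emph{Base case.} At $t=t_0$ the claim $S(t_0)\le A/t_0$ holds trivially, because $A\ge t_0S(t_0)$ by definition. This is exactly why the first entry appears in the maximum.

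\emph{Inductive step.} Suppose $S(t)\le A/t$ for some $t\ge t_0$. The recursion then gives $S(t+1)\le \frac{\alpha A}{t}+\frac{\alpha}{t+1}$, and it suffices to show
\begin{align*}
\frac{\alpha A}{t}+\frac{\alpha}{t+1}\le \frac{A}{t+1}.
\end{align*}
Multiplying through by $t(t+1)$, this is equivalent to
\begin{align*}
A\bigl(t-\alpha(t+1)\bigr)\ge \alpha t, \quad\text{i.e.,}\quad A\bigl((1-\alpha)t-\alpha\bigr)\ge \alpha t.
\end{align*}

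Now use $t\ge t_0\ge \frac{2\alpha}{1-\alpha}$, so $\alpha\le \frac{(1-\alpha)t}{2}$. Hence $(1-\alpha)t-\alpha\ge \frac{(1-\alpha)t}{2}>0$, and it suffices to have
\begin{align*}
A\,\frac{(1-\alpha)t}{2}\ge \alpha t,
\end{align*}
i.e., $A\ge \frac{2\alpha}{1-\alpha}$. This holds by the second entry of the maximum, which closes the induction.

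The main obstacle is the inductive step: the threshold $t_0$ must make $(1-\alpha)t-\alpha$ positive, with enough margin to absorb the $\alpha t$ term. Choosing the factor $2$ makes both requirements hold simultaneously, and this is where the exact constants in $t_0$ and $A$ come from. Beyond that, only the index bookkeeping in the recursion, including the edge case at $t=1$, needs a quick check.
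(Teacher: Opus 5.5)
Your induction is correct: the recursion $S(t+1)=\alpha S(t)+\frac{\alpha}{t+1}$, the trivial base case at $t_0$, and the inductive step using $(1-\alpha)t-\alpha\ge \frac{(1-\alpha)t}{2}$ together with $A\ge \frac{2\alpha}{1-\alpha}$ all check out. The paper omits this proof and cites its single-agent predecessor, but defining $A$ as the maximum of the base-case value $t_0S(t_0)$ and the inductive-step constant $\frac{2\alpha}{1-\alpha}$ points to exactly this recursion-plus-induction argument.
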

\noindent
Proposition~\ref{prop:S_uniform_bound} is a direct restatement of the corresponding summation bound from the single-agent analysis in \cite[Prop.~1]{TV_asilomar25}, included here for completeness; its proof is therefore omitted.
Combining \eqref{eq:FPTE_uniform_pre} with Proposition~\ref{prop:S_uniform_bound} yields
\begin{align}
\mathrm{FPTE}(t)
\le \alpha^t\|\mathbf w_0-\widetilde{\mathbf w}_1\|
+\frac{2G}{\mu}\frac{A}{t},
\qquad \forall t\ge t_0,
\label{eq:FPTE_uniform_final}
\end{align}
where $A$ and $t_0$ are given in Proposition~\ref{prop:S_uniform_bound}.
As a result, it holds that $\lim_{t\to\infty}\mathrm{FPTE}(t)=0$. The bound on the TE for the uniform temporal weighting strategy is provided next.
\begin{theorem}
    \label{thm:TE_uniform}
    Under uniform weights \eqref{eq:uniform_weights}, the tracking error satisfies
\begin{align}
\mathrm{TE}(t)
\le \alpha^t\|\mathbf w_0-\widetilde{\mathbf w}_1\|
+\frac{2G}{\mu}\frac{A}{t} + \eta\kappa\Lambda G,
\quad \forall t\ge t_0,
\label{eq:TE_uniform_final}
\end{align}
where $A$ and $t_0$ are defined in Proposition~\ref{prop:S_uniform_bound}. Consequently, $\limsup_{t\to\infty}\mathrm{TE}(t)\le \eta\kappa\Lambda G$.
\end{theorem}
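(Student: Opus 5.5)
The plan is to combine the tracking-error decomposition \eqref{eq:TE-decomposition} with the two bounds already established for its terms. By \eqref{eq:TE-decomposition}, $\mathrm{TE}(t)\le \mathrm{FPTE}(t)+\|\widetilde{\mathbf w}_t-\overline{\mathbf w}_t^*\|$ for every $t\ge1$. I will bound the first term with the uniform-weight fixed-point tracking bound \eqref{eq:FPTE_uniform_final}, and the second with the bias bound of Lemma~\ref{lem:generic_bias_bound}.

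For the first term, I recall how \eqref{eq:FPTE_uniform_final} was obtained. The unrolled recursion \eqref{eq:fpte_unrolled} relies on the contraction property of $\Phi_{t+1}$ from Lemma~\ref{lem:contraction}, which requires $0<\eta\le \frac{1+\lambda_N(\mat{M})}{L+\mu}$; I will state this standing step-size condition explicitly. Substituting the drift bound \eqref{eq:uniform_fp_drift} gives \eqref{eq:FPTE_uniform_pre}. Proposition~\ref{prop:S_uniform_bound} then bounds the sum by $A/t$ for $t\ge t_0$, which yields
\begin{align*}
\mathrm{FPTE}(t)\le \alpha^t\|\mathbf w_0-\widetilde{\mathbf w}_1\|+\frac{2G}{\mu}\frac{A}{t}, \qquad t\ge t_0 .
\end{align*}

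For the second term, Lemma~\ref{lem:generic_bias_bound} gives $\|\widetilde{\mathbf w}_t-\overline{\mathbf w}_t^*\|\le \eta\kappa\Lambda G$ uniformly in $t$. This uses the gradient bound \eqref{eq:grad-bnd-G} from Lemma~\ref{lem:boundedness_compact}, which holds for any valid weights and so in particular for uniform weights. Adding the two bounds gives \eqref{eq:TE_uniform_final}.

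For the asymptotic statement, I take $\limsup_{t\to\infty}$ of \eqref{eq:TE_uniform_final}. Since $\alpha=(1-\eta\mu)^E\in(0,1)$, the term $\alpha^t\|\mathbf w_0-\widetilde{\mathbf w}_1\|$ tends to zero. The term $\frac{2GA}{\mu t}$ also tends to zero, because $A$ is a finite constant: $t_0$ is finite and $S(t_0)$ is a finite sum. Only $\eta\kappa\Lambda G$ remains. No step here is a real obstacle, since the substantive work is in Proposition~\ref{prop:S_uniform_bound} and the earlier lemmas. The one point needing care is confirming that all the invoked bounds hold simultaneously for every $t\ge t_0$ under the same step-size condition of Lemma~\ref{lem:contraction} together with Assumptions~\ref{ass:smooth_sc}--\ref{ass:bounded_minimizers}.
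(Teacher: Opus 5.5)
Your proposal is correct and follows the paper's argument: the paper also obtains the bound by substituting the uniform-weight FPTE bound \eqref{eq:FPTE_uniform_final} and the bias bound \eqref{eq:bias_bound} into the decomposition \eqref{eq:TE-decomposition}, and gets the $\limsup$ by letting the geometric and $\mathcal{O}(1/t)$ terms vanish. Stating the step-size condition of Lemma~\ref{lem:contraction} and Assumptions~\ref{ass:smooth_sc}--\ref{ass:bounded_minimizers} explicitly is a useful clarification, since the paper leaves them implicit in the theorem statement.
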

\noindent
The proof of Theorem~\ref{thm:TE_uniform} follows directly from \eqref{eq:bias_bound} and \eqref{eq:FPTE_uniform_final}.


\begin{remark}
Since the first term in \eqref{eq:TE_uniform_final} decays geometrically and the second term decays as $\mathcal O(1/t)$, the asymptotic error under uniform weighting is determined entirely by the bias term $\eta\kappa\Lambda G$. Therefore, under a constant step size, the fixed-point tracking contribution vanishes, while the residual TE floor is due to data heterogeneity across the network.
\end{remark}

\subsubsection{Exponentially discounted weights}
Another natural choice is to geometrically discount older samples and place more emphasis on recent observations. This corresponds to choosing weights of the form $a_i(t)\propto \gamma^{t-i}$ for all $i\le t$, where $\gamma \in (0,1)$ is the discount factor. Enforcing the constraint in \eqref{eq:weight_simplex} gives
\begin{align}
a_i(t)=\frac{1-\gamma}{1-\gamma^t}\gamma^{t-i},
\qquad i=1,\dots,t,
\label{eq:discounted_weights}
\end{align}
Under \eqref{eq:discounted_weights}, the temporally weighted objective satisfies
\begin{align}
\fbar_{t+1}(\mathbf w)
&= \sum_{i=1}^{t+1}\frac{1-\gamma}{1-\gamma^{t+1}}\gamma^{t+1-i}f_i(\mathbf w)\nonumber\\
&= \sum_{i=1}^{t}\frac{1-\gamma}{1-\gamma^{t+1}}\gamma^{t+1-i}f_i(\mathbf w)
+\frac{1-\gamma}{1-\gamma^{t+1}}f_{t+1}(\mathbf w)\nonumber\\
&= \gamma\frac{1-\gamma^t}{1-\gamma^{t+1}}\fbar_t(\mathbf w)
+\frac{1-\gamma}{1-\gamma^{t+1}}f_{t+1}(\mathbf w).
\label{eq:fbar_discounted_recursion}
\end{align}
Using \eqref{eq:fbar_discounted_recursion} in \eqref{eq:generic_fp_drift}, the fixed-point drift can be bounded as
\begin{align}
\|\widetilde{\mathbf w}_{t+1}-\widetilde{\mathbf w}_t\|
&\leq \frac{1}{\mu} \Big\|\frac{1-\gamma}{1-\gamma^{t+1}}
\big(
\nabla f_{t+1}(\widetilde{\mathbf w}_{t+1})
-\nabla \fbar_t(\widetilde{\mathbf w}_{t+1})
\big)\Big\|\nonumber\\
&\le \frac{2G}{\mu}\frac{1-\gamma}{1-\gamma^{t+1}},
\label{eq:discounted_fp_drift}
\end{align}
where the last step follows from the triangle inequality together with Lemma~\ref{lem:boundedness_compact}. Next, substituting \eqref{eq:discounted_fp_drift} into \eqref{eq:fpte_unrolled} yields
\begin{align}
\mathrm{FPTE}(t)
\le 
\alpha^t\|\mathbf w_0-\widetilde{\mathbf w}_1\|
{+}\frac{2G}{\mu}(1-\gamma)
\sum_{i=1}^{t-1}\frac{\alpha^{t-i}}{1-\gamma^{i+1}}.
\label{eq:FPTE_discounted_pre}
\end{align}
To bound the second term in \eqref{eq:FPTE_discounted_pre}, we use the following result.
\begin{prop}
\label{prop:S_discounted_bound}
Define $S_\gamma(t) \triangleq \sum_{i=1}^{t-1}  \frac{(1-\gamma)\alpha^{t-i}}{1 - \gamma^{i+1}}$ and let $A_\gamma\triangleq\max\bigl\{\frac{(1-\gamma^{t_0})S_\gamma(t_0)}{1-\gamma}, \frac{2\alpha}{1-\alpha}\bigr\}$, and $t_0 \triangleq \lceil\ln(\frac{1-\alpha}{1+\alpha-2\gamma\alpha})/\ln(\gamma)\rceil$. Then for all $t\geq t_0 $ we have  
\begin{align}
S_\gamma(t)\le A_\gamma\,\frac{1-\gamma}{1-\gamma^t}.
\label{eq:S_discounted_bound}
\end{align}
Moreover, $\lim_{t\to\infty} S_\gamma(t)=\frac{(1-\gamma)\alpha}{1-\alpha}$.
\end{prop}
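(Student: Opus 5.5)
The plan is induction on $t$, built on a one-step recursion satisfied by $S_\gamma(t)$. Write $b_t \triangleq \frac{1-\gamma}{1-\gamma^t}$. Splitting off the $i=t$ term of $S_\gamma(t+1)$ and factoring $\alpha$ from the rest gives
\begin{align*}
S_\gamma(t+1)=\sum_{i=1}^{t}\frac{(1-\gamma)\alpha^{t+1-i}}{1-\gamma^{i+1}}
=\alpha S_\gamma(t)+\alpha\, b_{t+1}.
\end{align*}
This is the discounted analogue of the recursion behind Proposition~\ref{prop:S_uniform_bound}, where $b_{t+1}$ played the role of $1/(t+1)$. The claim \eqref{eq:S_discounted_bound} reads $S_\gamma(t)\le A_\gamma b_t$ for $t\ge t_0$.

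For the base case $t=t_0$, the bound holds by construction, since $A_\gamma\ge S_\gamma(t_0)/b_{t_0}$. If the ceiling produces $t_0\le 1$, I would start at $t=1$ instead, where $S_\gamma(1)=0$ and the bound is trivial.

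For the inductive step, assume $S_\gamma(t)\le A_\gamma b_t$. The recursion gives $S_\gamma(t+1)\le \alpha A_\gamma b_t+\alpha b_{t+1}$, so it suffices to show
\begin{align*}
\alpha\,\frac{b_t}{b_{t+1}}+\frac{\alpha}{A_\gamma}\le 1 .
\end{align*}
Since $A_\gamma\ge \frac{2\alpha}{1-\alpha}$, we have $\alpha/A_\gamma\le \frac{1-\alpha}{2}$. It then remains to show $r_t\triangleq b_t/b_{t+1}\le \frac{1+\alpha}{2\alpha}$. A direct computation gives
\begin{align*}
r_t=\frac{1-\gamma^{t+1}}{1-\gamma^t}=1+\frac{\gamma^t(1-\gamma)}{1-\gamma^t}.
\end{align*}
Hence the needed inequality is $2\alpha\gamma^t(1-\gamma)\le(1-\alpha)(1-\gamma^t)$, which rearranges to
\begin{align*}
\gamma^t\le \frac{1-\alpha}{1+\alpha-2\gamma\alpha}.
\end{align*}
The right-hand side lies in $(0,1)$ because $1+\alpha-2\gamma\alpha>1-\alpha$. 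Taking logarithms and dividing by $\ln\gamma<0$ flips the inequality and yields exactly $t\ge t_0$. Since $\gamma^t$ is decreasing in $t$, the condition holds for every $t\ge t_0$, which closes the induction.

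The step I expect to need the most care is this algebraic verification of $r_t\le\frac{1+\alpha}{2\alpha}$. It is where the specific form of $t_0$ enters, and the signs of the logarithms must be tracked correctly. Everything else in the induction is mechanical.

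For the limit, set $L\triangleq\frac{(1-\gamma)\alpha}{1-\alpha}$, the fixed point of $x\mapsto \alpha x+\alpha(1-\gamma)$. Let $e_t\triangleq S_\gamma(t)-L$. Subtracting $L=\alpha L+\alpha(1-\gamma)$ from the recursion gives
\begin{align*}
e_{t+1}=\alpha e_t+\alpha\bigl(b_{t+1}-(1-\gamma)\bigr).
\end{align*}
The forcing term tends to zero because $\gamma^t\to 0$, and $\alpha<1$. A standard argument for stable linear recursions with vanishing input then shows $e_t\to 0$: for instance, take $\limsup$ of $|e_{t+1}|\le\alpha|e_t|+\alpha\,|b_{t+1}-(1-\gamma)|$. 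Therefore $\lim_{t\to\infty}S_\gamma(t)=L$.
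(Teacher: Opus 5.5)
Your proposal is correct and matches the argument the paper relies on; the paper omits the proof and defers to \cite[Prop.~2]{TV_asilomar25}. That argument is induction on the exact recursion $S_\gamma(t+1)=\alpha S_\gamma(t)+\alpha\frac{1-\gamma}{1-\gamma^{t+1}}$, with $A_\gamma\ge\frac{2\alpha}{1-\alpha}$ absorbing the new term and $t_0$ being exactly the threshold where $\frac{1-\gamma^{t+1}}{1-\gamma^{t}}\le\frac{1+\alpha}{2\alpha}$, just as in Proposition~\ref{prop:S_uniform_bound}. Minor touch-ups: $t_0\ge 1$ always, since both logarithms are negative, so your $t_0\le 1$ caveat is moot; note that $e_t$ is bounded (e.g.\ by the first part) before taking $\limsup$; and avoid reusing $L$, which already denotes the smoothness constant.
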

\noindent
Proposition~\ref{prop:S_discounted_bound} is a direct restatement of the corresponding single-agent summation bound in \cite[Prop.~2]{TV_asilomar25}
Combining \eqref{eq:FPTE_discounted_pre} with Proposition~\ref{prop:S_discounted_bound}, we obtain
\begin{align}
\mathrm{FPTE}(t)
\le
\alpha^t\|\mathbf w_0-\widetilde{\mathbf w}_1\|
+\frac{2G}{\mu}A_\gamma\frac{1-\gamma}{1-\gamma^t},
\; \forall t\ge t_0,
\label{eq:FPTE_discounted_final}
\end{align}
where $A_\gamma$ and $t_0$ are given in Proposition~\ref{prop:S_discounted_bound}. From \eqref{eq:FPTE_discounted_final}, it further holds that $\limsup_{t\to\infty}\mathrm{FPTE}(t)
\le
\frac{2G}{\mu}\frac{(1-\gamma)\alpha}{1-\alpha}$.
TE guarantees for the discounted weights are provided next.

\begin{theorem}
\label{thm:TE_discounted}
Under discounted weights \eqref{eq:discounted_weights}, the tracking error satisfies
\begin{align}
\mathrm{TE}(t)
\le
\alpha^t\|\mathbf w_0-\widetilde{\mathbf w}_1\|
+\frac{2G}{\mu}A_\gamma\frac{1-\gamma}{1-\gamma^t}
+\eta\kappa\Lambda G, 
\label{eq:TE_discounted_final}
\end{align}
for all $t\ge t_0$, where $A_\gamma$ and $t_0$ are defined in Proposition~\ref{prop:S_discounted_bound}. Consequently,
\begin{align}
\limsup_{t\to\infty}\mathrm{TE}(t)
\le
\frac{2G}{\mu}\frac{(1-\gamma)\alpha}{1-\alpha}
+\eta\kappa\Lambda G.
\label{eq:ATE_discounted}
\end{align}
\end{theorem}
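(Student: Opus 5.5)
The plan is to combine the tracking-error decomposition \eqref{eq:TE-decomposition} with the two bounds already developed for its terms. This mirrors the argument for Theorem~\ref{thm:TE_uniform}.

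\textbf{Bias term.} For the second term, $\|\widetilde{\mathbf w}_t-\overline{\mathbf w}_t^*\|$, I invoke Lemma~\ref{lem:generic_bias_bound}. It holds for every fixed $t\ge1$ and any weights on the simplex \eqref{eq:weight_simplex}, so it applies verbatim to the discounted weights \eqref{eq:discounted_weights}. The only check needed is that Lemma~\ref{lem:boundedness_compact} gives $\|\nabla\fbar_t(\overline{\mathbf w}_t^*)\|\le G$ independently of the weighting. That lemma relies only on Assumptions~\ref{ass:smooth_sc}--\ref{ass:bounded_minimizers} and on $\fbar_t$ being a convex combination of the $f_i$, so the check goes through. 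This yields the bound $\eta\kappa\Lambda G$ for all $t$.

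\textbf{Fixed-point tracking term.} For the first term I use \eqref{eq:FPTE_discounted_final}, valid for $t\ge t_0$. It was obtained in three steps:
\begin{itemize}
\item[(i)] the contraction recursion \eqref{eq:fpte_unrolled}, which relies on Lemma~\ref{lem:contraction};
\item[(ii)] the discounted drift bound \eqref{eq:discounted_fp_drift}, which follows from Lemma~\ref{lem:generic-drift} together with the recursion \eqref{eq:fbar_discounted_recursion};
\item[(iii)] the summation estimate of Proposition~\ref{prop:S_discounted_bound}, which gives $S_\gamma(t)\le A_\gamma\frac{1-\gamma}{1-\gamma^t}$.
\end{itemize}
Adding the two bounds gives \eqref{eq:TE_discounted_final} for all $t\ge t_0$.

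\textbf{Asymptotic statement.} Here some care is needed. Passing to $\limsup$ in \eqref{eq:TE_discounted_final} directly would give a floor of $\frac{2G}{\mu}A_\gamma(1-\gamma)+\eta\kappa\Lambda G$. This is potentially weaker than \eqref{eq:ATE_discounted}, since $A_\gamma\ge\frac{2\alpha}{1-\alpha}$ exceeds $\frac{\alpha}{1-\alpha}$.

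I therefore return to the pre-bound \eqref{eq:FPTE_discounted_pre}. By definition of $S_\gamma$, that bound reads $\mathrm{FPTE}(t)\le\alpha^t\|\mathbf w_0-\widetilde{\mathbf w}_1\|+\frac{2G}{\mu}S_\gamma(t)$. The first term vanishes because $\alpha\in(0,1)$. For the second, the limit statement of Proposition~\ref{prop:S_discounted_bound} gives $S_\gamma(t)\to\frac{(1-\gamma)\alpha}{1-\alpha}$. Hence $\limsup_t\mathrm{FPTE}(t)\le\frac{2G}{\mu}\frac{(1-\gamma)\alpha}{1-\alpha}$, as noted after \eqref{eq:FPTE_discounted_final}.

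Taking $\limsup$ of $\mathrm{TE}(t)\le\mathrm{FPTE}(t)+\eta\kappa\Lambda G$ then gives \eqref{eq:ATE_discounted}, using subadditivity of $\limsup$ and the fact that the bias bound is constant in $t$.

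\textbf{Main obstacle.} The only nontrivial point is this last step: the asymptotic constant must come from the exact limit of $S_\gamma(t)$ rather than from the finite-$t$ bound. If one did not want to rely on the stated limit, I would verify it by dominated convergence. Write $S_\gamma(t)=\sum_{j=1}^{t-1}\alpha^j\frac{1-\gamma}{1-\gamma^{t-j+1}}$. Each summand is dominated by $\alpha^j\frac{1-\gamma}{1-\gamma^2}$, which is summable, and each summand tends to $(1-\gamma)\alpha^j$ pointwise as $t\to\infty$. Summing the pointwise limits over $j\ge1$ gives $\frac{(1-\gamma)\alpha}{1-\alpha}$.
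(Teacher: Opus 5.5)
Your proposal is correct and follows the paper's proof. The finite-$t$ bound \eqref{eq:TE_discounted_final} comes from the decomposition \eqref{eq:TE-decomposition} combined with the bias bound \eqref{eq:bias_bound} and the fixed-point tracking bound \eqref{eq:FPTE_discounted_final}, and the asymptotic floor comes from the limit statement of Proposition~\ref{prop:S_discounted_bound}. You are also right that this floor must be taken from \eqref{eq:FPTE_discounted_pre} together with the exact limit of $S_\gamma(t)$: taking $\limsup$ in \eqref{eq:FPTE_discounted_final} only gives $\frac{2G}{\mu}A_\gamma(1-\gamma)$ with $A_\gamma\ge\frac{2\alpha}{1-\alpha}$, a point the paper's phrase ``From \eqref{eq:FPTE_discounted_final}'' glosses over, and your dominated-convergence check of the limit is valid.
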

\noindent
The proof of Theorem~\ref{thm:TE_discounted} follows directly from \eqref{eq:bias_bound}, \eqref{eq:FPTE_discounted_final}, and Proposition~\ref{prop:S_discounted_bound}.
\begin{remark}
In contrast to uniform weighting, discounted weighting induces a non-vanishing fixed-point tracking floor. This occurs because the newest samples across the agents always induce a fixed-point drift of order $1-\gamma$, so the algorithm never fully catches up to the moving fixed point. 
\end{remark}

\section{Numerical Results}
In this section, we present numerical experiments to illustrate the tracking behavior predicted by the analysis. Although the analysis focuses on the scalar case, we simulate its natural vector extension. To capture decentralized streaming behavior, we consider a quadratic loss at each agent and time index. Specifically, for agent $n \in \mathcal{V}$ and time $t\ge 1$, the instantaneous loss is given by
\begin{align}
\ell_{n,t}(\mathbf{w})
=\frac{1}{2}\bigl(\mathbf{w}-\mathbf{c}_{n,t}\bigr)^\top \mat{A}_{n,t}\bigl(\mathbf{w}-\mathbf{c}_{n,t}\bigr),
\label{eq:num_local_quadratic_vec}
\end{align}
for all $\mathbf{w}\in\mathbb{R}^d$, where $\mathbf{c}_{n,t}\in\mathbb{R}^d$ captures the streaming-data process and $\mat{A}_{n,t}\in\mathbb{R}^{d\times d}$ is a positive definite Hessian matrix. In our experiments, we generate $\mat{A}_{n,t}=\mathrm{diag}(\lambda_{n,t}^1,\ldots,\lambda_{n,t}^d)$, where $\lambda_{n,t}^j\sim\mathrm{Unif}[\mu,L]$ independently across $n$, $t$, and $j$. Thus, $\ell_{n,t}$ in \eqref{eq:num_local_quadratic_vec} is $\mu$-strongly convex and $L$-smooth. It is easy to verify that the instantaneous network objective $F_t(\mathbf{w})=\frac{1}{N}\sum_{n=1}^N \ell_{n,t}(\mathbf{w})$ is minimized at $\mathbf{w}_t^*
=
\big(\sum_{n=1}^N \mat{A}_{n,t}\big)^{-1}
\big(\sum_{n=1}^N \mat{A}_{n,t}\mathbf{c}_{n,t}\big)$.
Accordingly, the time-weighted objective
$\Fbar_t(\mathbf{w})=\sum_{i=1}^t a_i(t)F_i(\mathbf{w})$
admits the minimizer $\overline{\mathbf{w}}_t^*
=
\big(\sum_{i=1}^t a_i(t)\sum_{n=1}^N \mat{A}_{n,i}\big)^{-1}
\big(\sum_{i=1}^t a_i(t)\sum_{n=1}^N \mat{A}_{n,i}\mathbf{c}_{n,i}\big)$,
enabling closed-form computation of the TE in \eqref{eq:TE_def}.

The temporal variation is driven by $\mathbf{c}_{n,t}$, which evolves according to a bounded Gaussian random walk. For each coordinate $j =1,\cdots,d$, it has the form:
\begin{align*}
[\mathbf{c}_{n,t+1}]_j =
\max\!\left(
-C_{\max},
\min\bigl([\mathbf{c}_{n,t}]_j+[\mathbf{z}_{n,t+1}]_j,\; C_{\max}\bigr)
\right),
\end{align*}
where $\mathbf{z}_{n,t}\sim \mathcal{N}(\mathbf{0},\sigma^2 \mat{I})$ are i.i.d. across agents and time, and $\mathbf{c}_{n,0}\sim\mathrm{Unif}[-C_{\max},C_{\max}]^d$. We set $\mathbf{w}_{n,0}=\mathbf{0}$ for all $n$.

For the communication network, we place $N = 30$ agents uniformly at random in a circular disk and generate an undirected random geometric graph. Two agents are connected if their Euclidean distance is at most a threshold $r_{\mathrm{th}}$. We increase $r_{\mathrm{th}}$ when needed to ensure connectivity, and then construct a symmetric doubly stochastic mixing matrix $\mat{M}$ via the Metropolis rule:
\begin{align}
[\mat{M}]_{ij}=
\begin{cases}
\displaystyle \frac{1}{1+\max\{d_i,d_j\}}, & (i,j)\in\mathcal{E},\ i\neq j,\\[1ex]
\displaystyle 1-\sum_{j\neq i}[\mat{M}]_{ij}, & i=j,\\
0, & \text{otherwise},
\end{cases}
\label{eq:num_metropolis}
\end{align}
where $d_i$ denotes the degree (number of neighbors in the network graph) of agent $i$. We choose the model dimension $d=50$. The streaming process parameters are chosen as $C_{\max}=10, \sigma^2=1,\mu=0.01$ and $L=0.1$.
We evaluate the tracking error using the stacked iterate $\mathbf{w}_t \triangleq [\mathbf{w}_{1,t}^\top,\ldots,\mathbf{w}_{N,t}^\top]^\top \in \mathbb{R}^{Nd}$, and accordingly, the TE is then computed as
$\mathrm{TE}(t)
\triangleq
\|\mathbf{w}_t-(\mathbf{1}\otimes \overline{\mathbf{w}}_t^*)\|_2$.
All results are reported in terms of root-mean-squared (RMS) TE, $
\sqrt{\frac{1}{R}\sum_{r=1}^R \mathrm{TE}_r(t)^2}$,
averaged over $R=50$ Monte Carlo runs. We consider both uniform temporal weights \eqref{eq:uniform_weights} and exponentially discounted weights \eqref{eq:discounted_weights}, with discount factor $\gamma=0.7$. Figures~\ref{fig:te_uniform_multiE}--\ref{fig:te_discounted_multiE} illustrate the corresponding tracking behavior for several choices of the number of inner DGD iterations $E$, under step size $\eta=0.05$.


\begin{figure*}[t]
    \centering
    \subfloat[Uniform temporal weights.]{
        \includegraphics[width=0.44\textwidth]{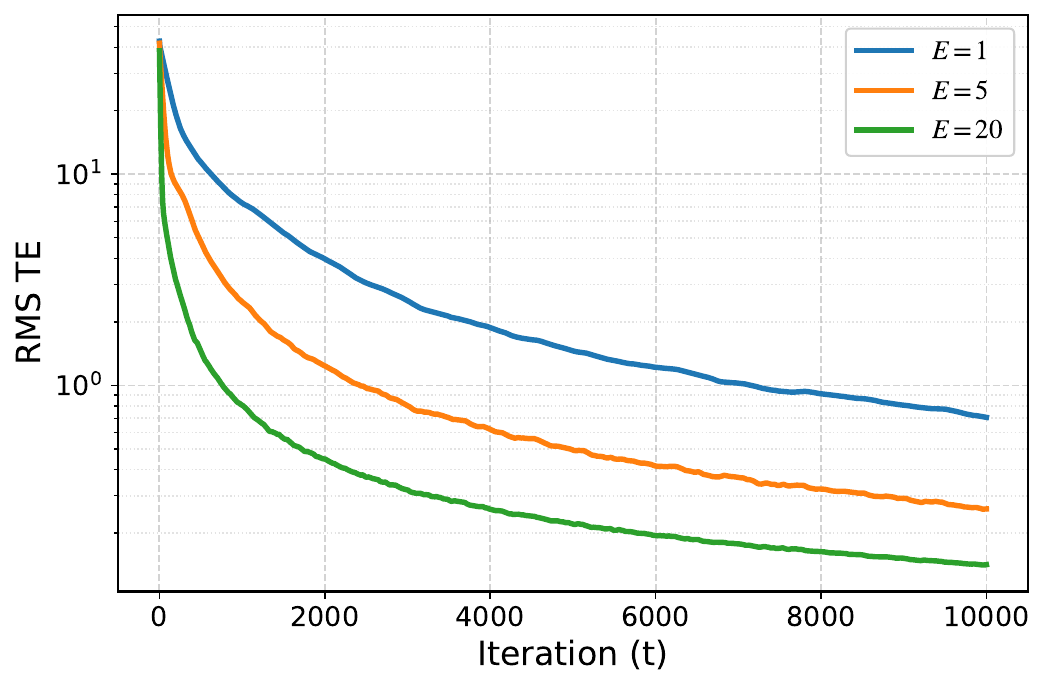}
        \label{fig:te_uniform_multiE}
    }
    \hfill
    \subfloat[Exponentially discounted temporal weights with $\gamma=0.7$.]{
        \includegraphics[width=0.44\textwidth]{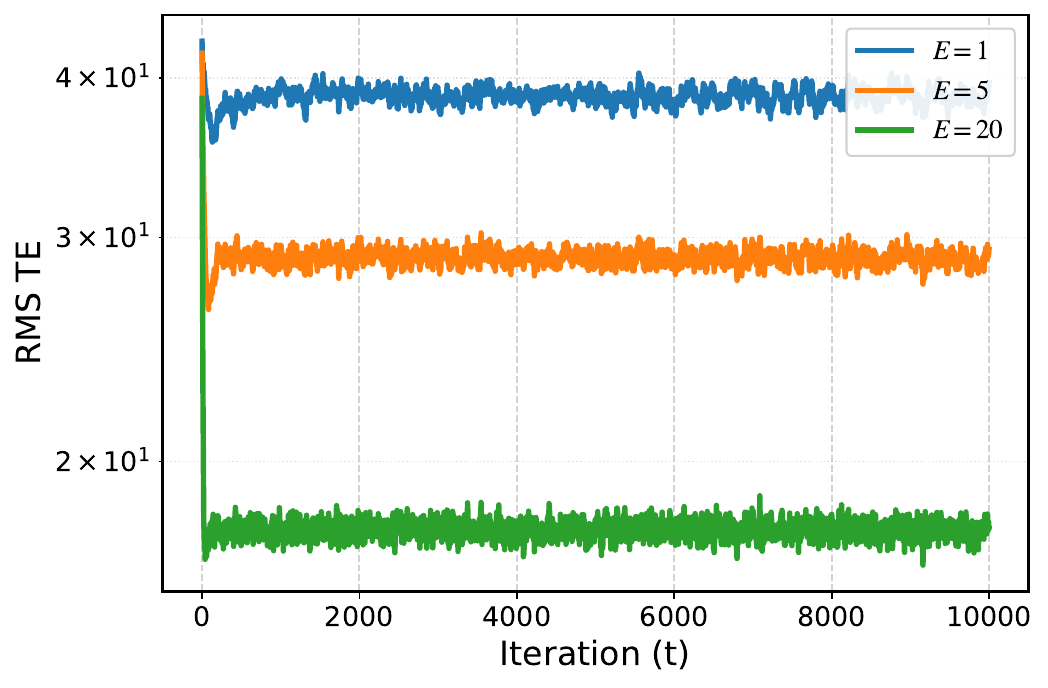}
        \label{fig:te_discounted_multiE}
    }
    \caption{RMS tracking error versus iteration index for different numbers of inner DGD steps $E$, with $\eta=0.05$. 
    }
    \label{fig:te_both}
    \vspace{-2mm}
\end{figure*}
Figure~\ref{fig:te_uniform_multiE} shows that, under uniform temporal weighting, increasing the number of DGD iterations per time index improves the transient tracking behavior by reducing the fixed-point tracking component. As predicted by the theory, the influence of the newest sample decreases with time, which leads to a decaying fixed-point drift, and hence a vanishing FPTE. Consequently, aligned with Theorem~\ref{thm:TE_uniform}, the RMS TE eventually approaches a regime dominated by the bias term, yielding a non-zero floor determined by the step size, network topology, and data heterogeneity.

Figure~\ref{fig:te_discounted_multiE} plots the RMS TE under discounted weighting with $\gamma=0.7$. In this case, for all the considered choices of $E$, a larger non-zero TE floor is attained. This is consistent with   Theorem~\ref{thm:TE_discounted}: exponentially discounted weighting induces a non-vanishing FPTE. While increasing $E$ improves tracking and lowers the final TE level, unlike uniform weighting, the FPTE does not vanish because old samples are continually forgotten and the effective objective keeps evolving at a non-vanishing rate. 

\section{Conclusion}
In this work, we studied decentralized optimization with streaming data under a temporal-weighting formulation. Focusing on DGD, we derived TE guarantees through a fixed-point analysis that decomposes the tracking error into a fixed-point tracking term and a bias term. Specializing the analysis to two natural weighting strategies, we showed that uniform weighting yields a vanishing fixed-point tracking contribution of order $\mathcal{O}(1/t)$, whereas the discounted weighting induces a non-vanishing steady-state contribution governed by the discount factor. Overall, the resulting weight-specific bounds provide interpretable characterizations of learning from streaming data and clarify how the temporal weighting rule and the per-step DGD budget affect the asymptotic tracking performance. Numerical experiments corroborate the theoretical findings.

\appendix
\label{app:A}
\begin{proof}[Proof of Lemma~\ref{lem:generic-drift}]
Fix any $t\ge 1$. From the definition of $\phi_t$ in \eqref{eq:dgd_map} and the fixed-point relation \eqref{eq:fixed_point_def}, we have
\begin{align}
    (\mat{I}-\mat{M})\widetilde{\mathbf w}_t+\eta \nabla\overline f_t(\widetilde{\mathbf w}_t)=\mathbf{0}.
    \label{eq:fixed-point-rel-unified}
\end{align}
Let $\mathbf{d}_t \triangleq \widetilde{\mathbf w}_{t+1}-\widetilde{\mathbf w}_t$ denote the fixed-point drift at time $t$. Writing \eqref{eq:fixed-point-rel-unified} at times $t$ and $t+1$ and subtracting, we obtain:
\begin{align}
(\mat{I}-\mat{M})\mathbf{d}_t
+\eta\bigl(\nabla\overline f_{t+1}(\widetilde{\mathbf w}_{t+1})
-\nabla\overline f_t(\widetilde{\mathbf w}_t)\bigr)=\mathbf{0}.
\label{eq:dgd_drift_raw}
\end{align} 
Next, by the mean-Hessian theorem \cite{MHT}, applied to $\fbar_t(\cdot)$ (satisfying Assumption~\ref{ass:smooth_sc}) at the points $\widetilde{\mathbf w}_t$ and $\widetilde{\mathbf w}_{t+1}$, there exists a symmetric matrix $\mat{A}$ such that $\mu \mat{I}\preceq \mat{A}\preceq L \mat{I}$ and
\begin{align}
\nabla\overline f_t(\widetilde{\mathbf w}_t)-\nabla\overline f_t(\widetilde{\mathbf w}_{t+1})
= \mat{A}(\widetilde{\mathbf w}_t-\widetilde{\mathbf w}_{t+1}).
\label{eq:MHT_drift_rel_app}
\end{align}
Substituting \eqref{eq:MHT_drift_rel_app} into \eqref{eq:dgd_drift_raw} gives
\begin{align*}
    \bigl[(\mat{I}-\mat{M})+\eta \mat{A}\bigr]\mathbf{d}_t
=-\eta\bigl(\nabla\overline f_{t+1}(\widetilde{\mathbf w}_{t+1})
-\nabla\overline f_t(\widetilde{\mathbf w}_{t+1})\bigr).
\end{align*}
Since $\mat{I}-\mat{M}\succeq 0$ and $\mat{A}\succeq \mu \mat{I}$, we have
$\lambda_{\min}((\mat{I}-\mat{M})+\eta \mat{A})\ge \eta\mu$. Therefore, $\|\mathbf{d}_t \|
\le \frac{\eta}{\eta\mu}\,
\bigl\|\nabla\overline f_{t+1}(\widetilde{\mathbf w}_{t+1})
-\nabla\overline f_t(\widetilde{\mathbf w}_{t+1})\bigr\|$, which is exactly \eqref{eq:generic_fp_drift}.
\end{proof}

\begin{proof}[Proof of Lemma~\ref{lem:boundedness_compact}]
We first bound the global minimizer $\overline w_t^*$. Since $\Fbar_t(\cdot)$ is $\mu$-strongly convex and minimized at $\overline w_t^*$, evaluating the strong-convexity inequality at $w=0$ yields
\begin{align}
    \bigl|\overline{w}_t^*\bigr|^2
    \leq \frac{2}{\mu}
         \bigl(\overline{F}_t(0) - \overline{F}_t(\overline{w}_t^*)\bigr).
         \label{eq:global-min-bnd_interm}
\end{align}
Using the definition of $\Fbar_t(\cdot)$ in \eqref{eq:main_tv_prob}, we obtain
\begin{align*}
\Fbar_t(0)-\Fbar_t(\overline w_t^*)
&=
\sum_{i=1}^t a_i(t)\frac{1}{N}\sum_{n=1}^N
\bigl(\ell_{n,i}(0)-\ell_{n,i}(\overline w_t^*)\bigr)
\\
&\le
\sum_{i=1}^t a_i(t)\frac{1}{N}\sum_{n=1}^N
\bigl(\ell_{n,i}(0)-\ell_{n,i}(w_{n,i}^*)\bigr),
\end{align*}
since $w_{n,i}^*$ minimizes $\ell_{n,i}(\cdot)$. By $L$-smoothness and the optimality of $w_{n,i}^*$, it holds that $\ell_{n,i}(0)-\ell_{n,i}(w_{n,i}^*)
\le \frac{L}{2}|w_{n,i}^*|^2
\le \frac{L}{2}C^2$, where the last step uses Assumption~\ref{ass:bounded_minimizers}. Since $\sum_{i=1}^t a_i(t)=1$, it follows that
\begin{align}
\Fbar_t(0)-\Fbar_t(\overline w_t^*)
\le \frac{L}{2}C^2.
\label{eq:global-min-bnd_interm1}
\end{align}
Combining \eqref{eq:global-min-bnd_interm} and \eqref{eq:global-min-bnd_interm1} yields the bound in \eqref{eq:minimizer_fixedpoint-bnd}. Next, we bound the fixed point $\widetilde{\mathbf w}_t$. The fixed-point relation reads
\begin{align}
(\mat{I}-\mat{M})\widetilde{\mathbf w}_t+\eta\nabla\overline f_t(\widetilde{\mathbf w}_t)=\mathbf{0}.
\label{eq:fp_dgd}
\end{align}
From the $\mu$-strong convexity of $\overline{f}_t(\cdot)$ (Assumption \ref{ass:smooth_sc}), we have $\overline{f}_t(\mathbf{0})
    \geq   \overline{f}_t(\widetilde{\mathbf{w}}_t) - \nabla \overline{f}_t(\widetilde{\mathbf{w}}_t)^\top\widetilde{\mathbf{w}}_t + \frac{\mu}{2}\bigl\|\widetilde{\mathbf{w}}_t\bigr\|^2$. From \eqref{eq:fp_dgd},
$\nabla\overline f_t(\widetilde{\mathbf w}_t)=-(1/\eta)(\mat{I}-\mat{M})\widetilde{\mathbf w}_t$ which further yields
$\overline f_t(\mathbf 0)
\ge \overline f_t(\widetilde{\mathbf w}_t)
+\frac{1}{\eta}\widetilde{\mathbf w}_t^\top(\mat{I}-\mat{M})\widetilde{\mathbf w}_t
+\frac{\mu}{2}\|\widetilde{\mathbf w}_t\|^2$. 
Next, since $\mat{M}$ is symmetric with $\lambda_{\max}(\mat{M})=1$, it follows that $\mat{I}-\mat{M}\succeq 0$. Therefore, the term $\frac{1}{\eta}\widetilde{\mathbf w}_t^\top(\mat{I}-\mat{M})\widetilde{\mathbf w}_t$ is nonnegative and can be dropped, giving
\begin{align}
\|\widetilde{\mathbf w}_t\|^2
\le \frac{2}{\mu}\bigl(\overline f_t(\mathbf 0)-\overline f_t(\widetilde{\mathbf w}_t)\bigr).
\label{eq:fixedpoint-bnd_interm}
\end{align}
Moreover, from the definition of $\overline{f}_t(\cdot)$, we have
\begin{align}
\fbar_t(\mathbf 0)-\fbar_t(\widetilde{\mathbf w}_t)
&=
\sum_{i=1}^t a_i(t)\sum_{n=1}^N
\bigl(\ell_{n,i}(0)-\ell_{n,i}([\widetilde{\mathbf w}_t]_n)\bigr)\nonumber\\
&\le
\sum_{i=1}^t a_i(t)\sum_{n=1}^N
\bigl(\ell_{n,i}(0)-\ell_{n,i}(w_{n,i}^*)\bigr)\nonumber\\
&\le
\sum_{i=1}^t a_i(t)\sum_{n=1}^N
\frac{L}{2}|w_{n,i}^*|^2
\le \frac{NL}{2}C^2.
\label{eq:fixedpoint-interm1}
\end{align}
Substituting \eqref{eq:fixedpoint-interm1} into \eqref{eq:fixedpoint-bnd_interm} proves the desired result in \eqref{eq:minimizer_fixedpoint-bnd}.

We next prove the gradient bounds. Let
$\overline{\mathbf m}_t^*=\arg\min_{\mathbf w\in\R^N}\fbar_t(\mathbf w)$.
Since $\fbar_t(\mathbf w)=\sum_{n=1}^N\sum_{i=1}^t a_i(t)\ell_{n,i}(w_n)$ is separable across coordinates, each component $[\overline{\mathbf m}_t^*]_n$ minimizes the scalar function $\sum_{i=1}^t a_i(t)\ell_{n,i}(w)$. By the same argument used above for $\overline w_t^*$ in \eqref{eq:global-min-bnd_interm}--\eqref{eq:global-min-bnd_interm1}, it follows that $|[\overline{\mathbf m}_t^*]_n|\le C\sqrt{\kappa}$, and hence $
\|\overline{\mathbf m}_t^*\|\le C\sqrt{N\kappa}$. Next, using $L$-smoothness of $\fbar_t(\cdot)$ and the optimality condition
$\nabla \fbar_t(\overline{\mathbf m}_t^*)=\mathbf 0$, for any $i,t\ge 1$ we have
\begin{align*}
\|\nabla\overline f_t(\widetilde{\mathbf w}_i)\|
&= \|\nabla\overline f_t(\widetilde{\mathbf w}_i)-\nabla\overline f_t(\overline{\mathbf m}_t^*)\|\\
&\leq L\|\widetilde{\mathbf w}_i-\overline{\mathbf m}_t^*\|
\leq  L\bigl(\|\widetilde{\mathbf w}_i\|+\|\overline{\mathbf m}_t^*\|\bigr) \leq 2LC\sqrt{N\kappa}.
\end{align*}
This proves the first bound in \eqref{eq:grad-bnd-G}. The bound on $\|\nabla f_t(\widetilde{\mathbf w}_i)\|$ follows analogously. 
Finally, since $\overline{\mathbf w}_t^*=\overline w_t^*\mathbf 1$, from \eqref{eq:minimizer_fixedpoint-bnd}, we have
$\|\overline{\mathbf w}_t^*\| \le C\sqrt{N\kappa}$.
Using again the $L$-smoothness of $\fbar_t(\cdot)$ and
$\nabla \fbar_t(\overline{\mathbf m}_t^*){=}\mathbf 0$, we obtain $\|\nabla \overline f_t(\overline{\mathbf w}_t^*)\| {=} \|\nabla \overline f_t(\overline{\mathbf w}_t^*)-\nabla \overline f_t(\overline{\mathbf m}_t^*)\| \leq L\|\overline{\mathbf w}_t^*-\overline{\mathbf m}_t^*\|
{\le} L\bigl(\|\overline{\mathbf w}_t^*\|+\|\overline{\mathbf m}_t^*\|\bigr)
{\le} 2LC\sqrt{N\kappa}$, 
which proves the desired bound.

\end{proof}

\balance
\bibliographystyle{IEEEtran}
\bibliography{Refs}

\end{document}